\newtheorem{prop}{Proposition}
\begin{document}

\title{Joint Transmit and Reflective Beamformer Design for Secure Estimation in IRS-Aided WSNs}
\author{Mohammad Faisal Ahmed, Kunwar~Pritiraj~Rajput,
       Naveen~K.~D.~Venkategowda, Kumar~Vijay~Mishra, and
        Aditya~K.~Jagannatham
        \thanks{M. F. A. is with Cisco Systems India Pvt. Ltd., Bengaluru
        560103, India, e-mail: mdfaisal165@gmail.com.} 
        \thanks{K. P. R. and A. K. J. are with the Department of Electrical Engineering, Indian Institute of Technology Kanpur, 208016 India, e-mail: \{pratiraj, adityaj\}@iitk.ac.in.}
        \thanks{N. K. D. V. is with the Department of Science and Technology, Link\"{o}ping University, Norrk\"{o}ping, 60174 Sweden, email: naveen.venkategowda@liu.se.}
        \thanks{K. V. M. is with the United States CCDC Army Research Laboratory, Adelphi, MD 20783 USA, e-mail: kvm@ieee.org.}
        }
\maketitle

\begin{abstract}
Wireless sensor networks (WSNs) are vulnerable to eavesdropping as the sensor nodes (SNs) communicate over an open radio channel. Intelligent reflecting surface (IRS) technology can be leveraged for  physical layer security in WSNs. In this paper, we propose a  joint transmit and reflective beamformer (JTRB) design for secure parameter estimation at the fusion center (FC) in the presence of an eavesdropper (ED) in a WSN. We develop a semidefinite relaxation (SDR)-based iterative algorithm, which alternately yields the transmit beamformer at each SN and the corresponding reflection phases at the IRS, to achieve the minimum mean-squared error (MSE) parameter estimate at the FC, subject to transmit power and ED signal-to-noise ratio constraints. Our simulation results demonstrate robust MSE and security performance of the proposed IRS-based JTRB technique.
\end{abstract}

\begin{IEEEkeywords}
Beamforming, IRS, non-convex optimization, physical layer security, wireless sensor networks.
\end{IEEEkeywords}

\section{Introduction}
In wireless sensor networks (WSNs), the information transmitted by the sensor nodes (SNs) for inference tasks such as event detection, tracking, state estimation may be sensitive and must be kept secure from adversaries. The open radio channel communication allows eavesdroppers (EDs) to intercept data  transmitted by the SNs \cite{7258313,9540882,9324793}. Traditional cryptographic security is resource-intensive for use in low-power WSNs \cite{7258313,9071996,8758230}. Therefore, physical layer  security have been proposed as an low-complexity alternative for secure distributed detection \cite{Li_Secure_Detection_TSP19,7931677,Li_Secure_Detection_TSP20}
 and secure remote state estimation \cite{Ding_Estimation_Activeve_TAC21,Leong_Scheduling_Eavesdropper_TAC19,8646374} in WSNs.
 
 Physical layer secrecy exploits channel conditions and aims to improve the rate of delivering reliable information to a legitimate receiver, while also safeguarding the information from an ED. Seminal work  by Wyner \cite{Wyner75} introduced the wiretap channel and {\it secrecy capacity}, i.e., the maximum rate at which a legitimate receiver can correctly decode the source message, while no useful information about the source signal can be obtained by an ED. 
The efficacy of physical layer security depends on the channel characteristics. Hence, intelligent reflecting surfaces (IRS), which can reshape the wireless signal propagation environment through software-controlled reflecting surfaces, \cite{pfeiffer2013metamaterial,hodge2021deep} can be leveraged to significantly improve physical layer security \cite{Deligiannis_Secrecy_Radar_TAES18}.
An IRS is a two-dimensional surface consisting of a large number of passive meta-material elements to reflect the incoming signal through a pre-computed phase shift \cite{elbir2020survey}. IRS further improves security by reflecting the incident signal such that the interference is constructive at the intended receiver whereas destructive at the ED \cite{9071996,mishra2022opt}.

IRS-aided security often requires joint optimization of  transmit  beamformer and reflection surface phases. In \cite{9299783},  joint transmit and reflective beamforming (JTRB) was proposed to achieve a predefined secrecy rate while minimizing the total network power. The converse JTRB problem of maximizing the secrecy rate for a given transmit power budget was studied in \cite{8972400}, and subsequently for a multiple-input single-output (MISO) system in \cite{9014322}. A more challenging scenario, wherein the eavesdropping channel is stronger than the legitimate communication channel in addition to being spatially correlated, was considered in \cite{8723525}. A maximal ratio transmission-based MISO beamforming technique that minimizes the transmit power, subject to receiver signal-to-noise ratio (SNR) and ED jamming level constraints, was proposed in \cite{9146177}. To improve the quality of service (QoS) at users under jamming attacks in a multi-user system using IRS, robust beamforming when jammer's  channel state information (CSI) is unknown was proposed in \cite{9672153}. Sun \textit{et al.} \cite{9513582} consider IRS-assisted base station when both jammer and ED are present and develop an interesting JTRB design without the knowledge of jammer's transmit beamformer and ED's CSI. Some recent works \cite{Deligiannis_Secrecy_Radar_TAES18,eltayeb2017enhancing,su2020secure,fangsinr} also consider IRS-aided secrecy for dual-function sensing-communications \cite{mishra2019toward}.  

Contrary to the aforementioned works, where the objective was reliable reproduction of information at the receiver, we focus on maximizing the inference accuracy in WSN when noisy measurements from SNs are transmitted over unreliable channels. In this paper we consider the problem of remote estimation, where SNs transmit their observations to the fusion center (FC) that estimates a sensitive parameter. We employ IRS  not only for in-network signal processing of SN's measurements but also overcoming the wireless channel effects to minimize the estimation error while guaranteeing that ED cannot estimate the sensitive parameter with a specified accuracy.

In this context, there is a lack of research that considers IRS for facilitating secure parameter estimation at the FC from SNs observations transmitted over a coherent multiple access channel (MAC) \cite{7809037,8768026,8286913}. However, we propose a JTRB in IRS-aided WSN for secure parameter estimation by minimizing the mean-squared error (MSE) at the FC while requiring the MSE at the ED above a desired threshold. Assuming a minimum MSE estimator at the FC, the transmit beamformer at SNs and the reflection phases at the IRS is computed through a semi-definite relaxation (SDR)-based approach to solve the resulting non-convex optimization problem with power constraints. Numerical experiments demonstrate the efficacy of the proposed design and validate the analytical formulations.

Throughout the paper, we use $\text{Tr}[\cdot]$ and $\mathbb{E}\{\cdot\}$ to represent trace and statistical expectation operators, respectively; the notations $\odot$, $(.)^H$ and $|.|$ denote the Hadmard product, Hermitian, and magnitude, respectively; and $\mathbf{A}=D(\mathbf{a})$ is a diagonal matrix with the vector $\mathbf{a}$ as the principal diagonal. The notation $a \sim \mathcal{CN}(0,\sigma_a^2)$ denotes a circularly symmetric complex Gaussian random variable $a$ with zero mean and variance $\sigma_a^2$.

\section{System Model}
Consider a coherent MAC-based IRS-assisted WSN comprising of $K$ single-antenna SNs, an IRS with $N$ passive reflecting elements, and a single antenna FC  as shown in Fig.~\ref{Fig.SM}. The observations corresponding to each SN are transmitted to the FC, and can potentially be intercepted by an ED. Denote the unknown parameter to be estimated by $\theta \in \mathbb{C}$, with power $\mathbb{E}\left \{  | \theta |^2 \right \} = 1$. The $k$-th sensor observation is \par\noindent\small
\begin{align}
    x_{k} = \alpha _{k}\theta + n_{k},
\end{align}\normalsize
where $\alpha _{k} \in \mathbb{C}$ and $n_{k} \sim \mathcal{CN}(0,\sigma_k^2)$ represent the observation scaling factor and observation noise, respectively. 
Each SN precodes its observations using the precoding coefficient $\beta_{k}\in \mathbb{C}$, and subsequently, transmits them over a coherent MAC to the FC. Denote the wireless fading channel vector between the $k$-th SN and IRS by $\mathbf{h}_{k,I}\in \mathbb{C}^{N \times 1}$, with the concatenated channel matrix $\mathbf{H}_{I} = [\mathbf{h}_{1,I} , \hdots , \mathbf{h}_{K,I}]\in \mathbb{C}^{N \times K}$. The incident signal from all the SNs at the IRS is 
$\mathbf{y}_{I}=\sum_{k=1}^{K}\mathbf{h}_{k,I}\beta_{k}x_{k} = \mathbf{H}_{I}\tilde{\mathbf{x}}$, where
$\tilde{\mathbf{x}} = [\beta_{1}x_{1},\hdots,\beta_{K}x_{K}]^T$. 
Expanding $\mathbf{y}_{I}$ gives 
\begin{equation}
\mathbf{y}_{I} = \mathbf{H}_{I}((\boldsymbol{\alpha}\odot \boldsymbol{{\beta}})\theta + \boldsymbol{{\beta}}\odot\mathbf{n}),
\end{equation}
where the $K \times 1$ complex gain vector $\boldsymbol{\alpha}=[\alpha_1,\alpha_2,\hdots,\alpha_K]$, precoding vector $\boldsymbol{{\beta}}=[\beta_1,\beta_2,\hdots,\beta_K]$, and noise vector $\mathbf{n} = [n_{1},\hdots,n_{K}]^T$, with noise covariance $\mathbb{E}\left \{  \mathbf{n}\mathbf{n}^H \right \} = \sigma^2_{n}\textbf{I}$.

We model the passive IRS elements that reflect the incident signal, operated via a digital controller, using a diagonal phase-shift matrix $\boldsymbol{\Theta} = D(\boldsymbol{\phi})\in \mathbb{C}^{N \times N}$, where $\boldsymbol{\phi}=[e^{j\phi_{1}},\hdots,e^{j\phi_{N}} ] \in \mathbb{C}^{N \times 1}$. Define $\mathbf{h}_{I,F} \in \mathbb{C}^{1 \times N}$ as the channel vector between the IRS and FC, and $h_{k,f}\in\mathbb{C}$ as the scalar channel coefficient from the $k$-th SN to FC, with the concatenated channel vector $\mathbf{h}_{f} = [h_{1,f}, ... , h_{K,f}] \in \mathbb{C}^{1 \times N}$. The received signal at the FC is a superposition of the direct signals from all the SNs to the FC and the signal reflected by the IRS, i.e.,
\begin{equation}
{y}_{F} = (\mathbf{h}_{I,F}\boldsymbol{\Theta}\mathbf{H}_{I} + \mathbf{h}_{f})((\boldsymbol{\alpha}\odot \boldsymbol{{\beta}})\theta + \boldsymbol{{\beta}}\odot\mathbf{n} ) + n_{f},
\end{equation}
where $u_{f} \sim \mathcal{CN}(0,\sigma^2_{f})  \in \mathbb{C}$ is the  noise at the FC. 

Denote the channel vector between the IRS and ED by $\mathbf{h}_{I,E}\in \mathbb{C}^{1 \times N}$ and the scalar channel coefficient between the $k$-th SN and ED by $h_{k,e}\in \mathbb{C}$, with $\mathbf{h}_{e} = [h_{1,e}, ... , h_{K,e}]^T \in \mathbb{C}^{1 \times K}$. The received signal at the single-antenna ED is 
\begin{equation}
{y}_{E} = (\mathbf{h}_{I,E}\boldsymbol{\Theta}\mathbf{H}_{I} + \mathbf{h}_{e})((\boldsymbol{\alpha}\odot \boldsymbol{{\beta}})\theta + \boldsymbol{{\beta}}\odot\mathbf{n} ) + n_e,\; \in \mathbb{C},
\end{equation}
where $u_e \sim \mathcal{CN}(0,\sigma^2_e) \in \mathbb{C}$ is the noise at ED. Let ${\mathbf{{h}}}_F=(\mathbf{h}_{I,F}\boldsymbol{\Theta}\mathbf{H}_{I} + \mathbf{h}_{f})\in \mathbb{C}^{1 \times N}$ and $\mathbf{{h}}_{E}=(\mathbf{h}_{I,E}\boldsymbol{\Theta}\mathbf{H}_{I} + \mathbf{h}_{e})\in \mathbb{C}^{1 \times K}$. It follows that
${y}_{F} = \mathbf{{h}}_{F}((\boldsymbol{\alpha}\odot \boldsymbol{{\beta}})\theta + \boldsymbol{{\beta}}\odot\mathbf{n} ) + u_{f}$,
and
${y}_{E} = \mathbf{{h}}_{E}((\boldsymbol{\alpha}\odot \boldsymbol{{\beta}})\theta + \boldsymbol{{\beta}}\odot\mathbf{n}) + u_e$.

\begin{figure}
\includegraphics[width=0.75\linewidth]{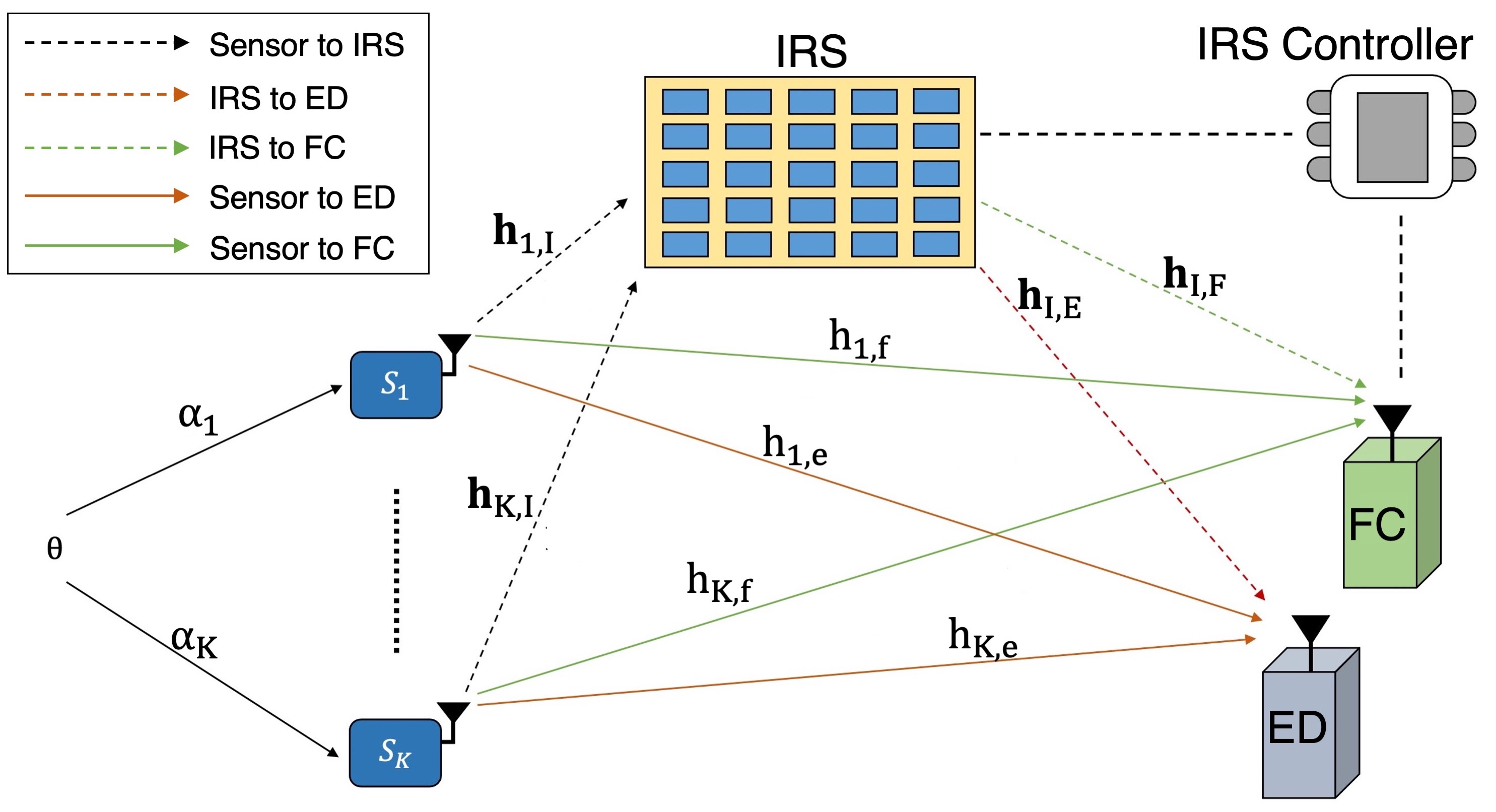}
\centering
\caption{IRS-based WSN with single-antenna sensor nodes transmitting measurements over a coherent MAC to the FC in the presence of an ED.}
\label{Fig.SM}
\end{figure}

On employing the linear minimum MSE (LMMSE) combiner at the FC and ED to generate the estimate $\hat{\theta}$ of the underlying parameter, the
resulting MSEs at the FC and ED, respectively, are\par\noindent\small
\begin{align}\label{EQ:MSE_FC}
\text{MSE}_{\text{FC}}=\Bigg(1+\underbrace{{\frac{\left | {\mathbf{h}}_{F}(\boldsymbol{\alpha} \odot\boldsymbol{{\beta}}) \right |^2}{\sigma_{o}^{2}\boldsymbol{{\beta}}^{H}D({\mathbf{h}}_F)D^{H}({\mathbf{h}}_F)\boldsymbol{{\beta}}+\sigma_{f}^{2}}}}_{\text{SNR}_{\text{FC}}}\Bigg)^{-1} 
\end{align}
\begin{align}\label{EQ:MSE_ED}
\text{MSE}_{\text{ED}}=\Bigg(1+\underbrace{\frac{\left | {\mathbf{h}}_{E}(\boldsymbol{\alpha} \odot\boldsymbol{{\beta}}) \right |^2}{\sigma_{o}^{2}\boldsymbol{\beta}^{H}D({\mathbf{h}}_E)D^{H}({\mathbf{h}}_E)\boldsymbol{{\beta}}+\sigma_{e}^{2}}}_{\text{SNR}_{\text{ED}}}\Bigg)^{-1}.
\end{align}\normalsize

It follows from \eqref{EQ:MSE_FC} that minimizing  $\text{MSE}_{\text{FC}}$ is equivalent to maximizing the $\text{SNR}_{\text{FC}}$. Also, to ensure that the ED is unable to reliably estimate the underlying parameter, one needs to design the JTRB such that the $\text{SNR}_{\text{ED}}$ value lies below a predefined threshold $\eta$. The total average network transmit power is constrained as
$\boldsymbol{{\beta}}^{H}D^{H}(\boldsymbol{\alpha})D(\boldsymbol{\alpha})\boldsymbol{{\beta}} + \sigma^{2}_{o}\boldsymbol{{\beta}}^{H}\boldsymbol{{\beta}} \leq P_T$, 
where $P_T$ is the transmit power budget. Subject to these constraints on the total power 
and $\text{SNR}_{\text{ED}}$, the optimization problem to minimize the $\text{MSE}_{\text{FC}}$ in \eqref{EQ:MSE_FC} is  \par\noindent\small
\begin{subequations}
\label{eq:opt}
\begin{align}
& \underset{\boldsymbol{{\beta}},\boldsymbol{\phi}}{\mathrm{maximize}}
& &   \frac{\left | {\mathbf{h}}_{F}(\boldsymbol{\alpha} \odot\boldsymbol{{\beta}}) \right |^2}{\sigma_{o}^{2}\boldsymbol{{\beta}}^{H}D({\mathbf{h}}_F)D^{H}({\mathbf{h}}_F)\boldsymbol{{\beta}} + \sigma^2_{f} } \label{11a}   \\
&\textrm{subject to}
& & \boldsymbol{{\beta}}^{H}D^{H}(\boldsymbol{\alpha})D(\boldsymbol{\alpha})\boldsymbol{{\beta}} + \sigma^{2}_{o}\boldsymbol{{\beta}}^{H}\boldsymbol{{\beta}} \leq P_{T} \label{11b}\\
& & & \frac{\left | {\mathbf{h}}_{E}(\boldsymbol{\alpha} \odot\boldsymbol{{\beta}}) \right |^2}{\sigma_{o}^{2}\boldsymbol{\beta}^{H}D({\mathbf{h}}_E)D^{H}({\mathbf{h}}_E)\boldsymbol{{\beta}}+\sigma_{e}^{2}} \leq \eta \label{11c}\\
& & & |\boldsymbol{\phi}_{i}| = 1 \;,\;i = 1,...,N. \label{11d}
\end{align}
\end{subequations} \normalsize
The coupling between $\boldsymbol{\beta}$ and $\boldsymbol{\phi}$ in the objective function in (7a)  renders it non-convex. Together, with the unit-modulus constraints, this problem is intractable. Hence, we develop an iterative scheme to solve this problem based on the principle of  alternate optimization and also prove that the proposed algorithm achieves convergence.

\section{JTRB Design Algorithm}

Define $\gamma$ as an auxiliary variable and recast \eqref{eq:opt} as
\par\noindent\small
\begin{align*}
\label{EQ:opt1}
& \underset{\boldsymbol{{\beta}},\boldsymbol{\phi},\gamma}{\mathrm{maximize}}
& &   \gamma \tag{8a}\\
&\mathrm{subject\; to} 
& &  \frac{\left | {\mathbf{h}}_{F}(\boldsymbol{\alpha} \odot\boldsymbol{{\beta}}) \right |^2}{\sigma_{o}^{2}\boldsymbol{{\beta}}^{H}D({\mathbf{h}}_F)D^{H}({\mathbf{h}}_F)\boldsymbol{{\beta}} + \sigma^2_{f} } \geq \gamma  \tag{8b} \label{12}\\
& & & \eqref{11b}, \eqref{11c}\; \text{and} \; \eqref{11d}.  
\end{align*}\normalsize
\setcounter{equation}{8}
The constraints in \eqref{11c} and \eqref{12} can be recast as
 $\text{T}(\boldsymbol{\phi},\boldsymbol{\beta})=\left | {\mathbf{h}}_{E}(\boldsymbol{\alpha} \odot\boldsymbol{{\beta}}) \right |^2 - \eta\left(\sigma_{o}^{2}{\mathbf{h}_E}D(\boldsymbol{{\beta}}^{H})D(\boldsymbol{{\beta}}){\mathbf{h}}^{H}_E - \sigma_{e}^{2}\right) \leq 0$, 
 and $\text{S}(\boldsymbol{\phi},\boldsymbol{\beta})= {\left | {\mathbf{h}}_{F}(\boldsymbol{\alpha} \odot\boldsymbol{{\beta}}) \right |^2} - \gamma(\sigma_{o}^{2}{\mathbf{h}_F}D(\boldsymbol{{\beta}}^{H})D(\boldsymbol{{\beta}}){\mathbf{h}}^{H}_F + \sigma_{f}^{2}) \geq 0$,  respectively. 
This changes the above optimization problem to\par\noindent\small 
\begin{align}
& \underset{\boldsymbol{\phi},\boldsymbol{\beta},\gamma}{\text{maximize}}
& & \gamma  \nonumber \\
& \text{subject to}
&  &\eqref{11b}, \eqref{11d},\text{T}(\boldsymbol{\phi},\boldsymbol{\beta}) \leq 0,\; 
\text{S}(\boldsymbol{\phi},\boldsymbol{\beta}) \geq 0. 
\end{align}\normalsize
To tackle this non-convex and, hence, intractable problem, we invoke SDR. Define $\mathbf{B}={\boldsymbol{{\beta}}}{\boldsymbol{{\beta}}}^H \in \mathbb{C}^{K \times K}$ with $\text{rank}(\mathbf{B}) \leq 1$
and $\boldsymbol{\Lambda}=D(\boldsymbol{\alpha}^{H})D(\boldsymbol{\alpha}) \in \mathbb{C}^{K \times K}$. The constraint in \eqref{11b} is
$\text{Tr}[\boldsymbol{\Lambda}\mathbf{B}]+\sigma^2_0\text{Tr}[\mathbf{B}] \leq P_T$.
Furthermore, the resulting problem is convex in each of the optimization variables, for fixed values of the other two.
The following proposition formulates the problem to determine the optimal value of $\boldsymbol{\phi}$ in the first iteration. 
\begin{prop}
\label{prop:SNR_Quad_Function}
For a given $\mathbf{B}$, $\text{S}(\boldsymbol{\phi},\mathbf{B}) = \text{S}(\boldsymbol{\phi}|\mathbf{B})$ is a quadratic function given by\par\noindent\small
\begin{flalign*}
   &\text{S}(\boldsymbol{\phi}|\mathbf{B})\nonumber\\
   &\hspace{-2.5mm}=\begin{bmatrix}
 \boldsymbol{\phi} \\1
 \end{bmatrix}^H \Bigg[\underbrace{\begin{bmatrix}
\mathbf{P}_1 & \mathbf{p}_2\\ 
\mathbf{p}^H_2 & 0
\end{bmatrix}}_{\mathbf{P}(\mathbf{B)}} -\gamma\sigma_{o}^{2} \underbrace{\begin{bmatrix}
\mathbf{R}_1 & \mathbf{r}_2\\ 
\mathbf{r}^H_2 & 0
\end{bmatrix}}_{\mathbf{R}(\mathbf{B)}}\Bigg]\begin{bmatrix}
 \boldsymbol{\phi} \\1
 \end{bmatrix} + p_3+r_3- \gamma\sigma_{f}^{2}.
\end{flalign*}\normalsize
\end{prop}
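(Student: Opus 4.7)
The plan is to prove Proposition 1 by reducing $\mathbf{h}_F$ to an affine function of $\boldsymbol{\phi}$, expanding both the numerator-like term $|\mathbf{h}_F(\boldsymbol{\alpha}\odot\boldsymbol{\beta})|^2$ and the denominator-like term $\sigma_o^2\mathbf{h}_F D(\boldsymbol{\beta}^H)D(\boldsymbol{\beta})\mathbf{h}_F^H$ into quadratic forms in $\boldsymbol{\phi}$, and then homogenizing the affine parts by appending a $1$ to $\boldsymbol{\phi}$. First I would exploit the identity $\mathbf{h}_{I,F}D(\boldsymbol{\phi})\mathbf{H}_I=\boldsymbol{\phi}^T D(\mathbf{h}_{I,F}^T)\mathbf{H}_I$, so that letting $\mathbf{A}=D(\mathbf{h}_{I,F}^T)\mathbf{H}_I\in\mathbb{C}^{N\times K}$ and denoting its $k$-th column by $\mathbf{a}_k$, the $k$-th entry of $\mathbf{h}_F$ becomes $\boldsymbol{\phi}^T\mathbf{a}_k + h_{k,f}$. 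Consequently $\mathbf{h}_F(\boldsymbol{\alpha}\odot\boldsymbol{\beta})=\boldsymbol{\phi}^T\mathbf{v}+c$, where $\mathbf{v}=\mathbf{A}(\boldsymbol{\alpha}\odot\boldsymbol{\beta})$ and $c=\mathbf{h}_f(\boldsymbol{\alpha}\odot\boldsymbol{\beta})$, both depending on $\mathbf{B}=\boldsymbol{\beta}\boldsymbol{\beta}^H$ only through $\boldsymbol{\beta}$ (which is legitimate since for fixed $\mathbf{B}$ we treat $\boldsymbol{\beta}$ as fixed in this step, and will re-express the outer products in terms of $\mathbf{B}$ at the end).

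Next I would square and exploit the fact that $|\boldsymbol{\phi}^T\mathbf{v}+c|^2$ is real, rewriting it as $|\boldsymbol{\phi}^H\mathbf{v}^*+c^*|^2 = \boldsymbol{\phi}^H(\mathbf{v}^*\mathbf{v}^T)\boldsymbol{\phi}+\boldsymbol{\phi}^H(c\mathbf{v}^*)+(c\mathbf{v}^*)^H\boldsymbol{\phi}+|c|^2$. This identifies $\mathbf{P}_1=\mathbf{v}^*\mathbf{v}^T$, $\mathbf{p}_2=c\mathbf{v}^*$, $p_3=|c|^2$, each a function of $\mathbf{B}$ once the outer products $\mathbf{v}\mathbf{v}^H$ and $c^*\mathbf{v}$ are traced back to expressions involving $\boldsymbol{\beta}\boldsymbol{\beta}^H$, $\boldsymbol{\alpha}$, and the channel vectors. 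In parallel, I would expand the second term entrywise: since $D(\boldsymbol{\beta}^H)D(\boldsymbol{\beta})$ is diagonal with entries $|\beta_k|^2$,
\begin{equation*}
\mathbf{h}_F D(\boldsymbol{\beta}^H)D(\boldsymbol{\beta})\mathbf{h}_F^H=\sum_{k=1}^{K}|\beta_k|^2|\boldsymbol{\phi}^T\mathbf{a}_k+h_{k,f}|^2,
\end{equation*}
and applying the same real-valuedness trick to each summand yields the matrix $\mathbf{R}_1=\sum_k|\beta_k|^2\mathbf{a}_k^*\mathbf{a}_k^T$, the vector $\mathbf{r}_2=\sum_k|\beta_k|^2h_{k,f}\mathbf{a}_k^*$, and the constant $r_3=\sum_k|\beta_k|^2|h_{k,f}|^2$, all expressible through $\mathbf{B}$.

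The last step is the homogenization: the block form
\begin{equation*}
\begin{bmatrix}\boldsymbol{\phi}\\1\end{bmatrix}^H\begin{bmatrix}\mathbf{M} & \mathbf{m}\\ \mathbf{m}^H & 0\end{bmatrix}\begin{bmatrix}\boldsymbol{\phi}\\1\end{bmatrix}=\boldsymbol{\phi}^H\mathbf{M}\boldsymbol{\phi}+\boldsymbol{\phi}^H\mathbf{m}+\mathbf{m}^H\boldsymbol{\phi}
\end{equation*}
absorbs the quadratic and affine parts of each of the two terms into the matrices $\mathbf{P}(\mathbf{B})$ and $\mathbf{R}(\mathbf{B})$, while the leftover scalars $p_3$, $r_3$, and the constant $-\gamma\sigma_f^2$ from $\mathrm{S}$ sit outside. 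Substituting these into the definition of $\mathrm{S}(\boldsymbol{\phi},\boldsymbol{\beta})$ yields exactly the displayed expression.

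The only real obstacle is bookkeeping: keeping straight which occurrences involve $\boldsymbol{\phi}$ versus $\boldsymbol{\phi}^*$, carefully using $\mathbf{h}_{I,F}D(\boldsymbol{\phi})=(\mathbf{h}_{I,F}\odot\boldsymbol{\phi}^T)$, and converting the native $\boldsymbol{\phi}^T\cdots\boldsymbol{\phi}^*$ quadratic form produced by the expansion into the $\boldsymbol{\phi}^H\cdots\boldsymbol{\phi}$ form asserted in the proposition via conjugation (which is permissible since the quantities are real). Once this is done consistently, the rest is a routine grouping of terms of matching degree in $\boldsymbol{\phi}$.
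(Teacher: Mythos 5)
Your proof is correct and follows essentially the same route as the paper's: substitute the affine dependence of $\mathbf{h}_F$ on $\boldsymbol{\phi}$, expand the two terms of $\text{S}$ into quadratic, linear, and constant parts in $\boldsymbol{\phi}$, and homogenize by appending the $1$; your matrices are the entrywise conjugates of the paper's (e.g., $\mathbf{v}^*\mathbf{v}^T$ versus $D(\mathbf{h}_{I,F})\mathbf{H}_ID(\boldsymbol{\alpha})\mathbf{B}D(\boldsymbol{\alpha}^H)\mathbf{H}_I^HD(\mathbf{h}_{I,F}^H)=\mathbf{v}\mathbf{v}^H$), which is immaterial since the quadratic form is real, exactly as you note. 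Your careful bookkeeping in fact yields the constant term $p_3-\gamma\sigma_o^2 r_3-\gamma\sigma_f^2$, so the ``$+\,r_3$'' appearing in the proposition's displayed formula is a typo in the paper rather than a gap in your argument.
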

\begin{proof}
Substituting ${\mathbf{{h}}}_F=(\mathbf{h}_{I,F}\boldsymbol{\Theta}\mathbf{H}_{I} + \mathbf{h}_{f})$ in $\text{S}(\boldsymbol{\phi},\boldsymbol{\beta})= {\left | {\mathbf{h}}_{F}(\boldsymbol{\alpha} \odot\boldsymbol{{\beta}}) \right |^2} - \gamma(\sigma_{o}^{2}{\mathbf{h}_F}D(\boldsymbol{{\beta}}^{H})D(\boldsymbol{{\beta}}){\mathbf{h}}^{H}_F + \sigma_{f}^{2}) $ and using the properties of the diagonal operator and element-wise product, $ \text{S}(\boldsymbol{\phi}|\mathbf{B})$ is expressed in the stated quadratic form with ${\mathbf{P}_1} =D(\mathbf{{h}}_{I,F})\mathbf{H}_ID(\boldsymbol{\alpha})\mathbf{B}D(\boldsymbol{\alpha}^{H})\mathbf{H}^{H}_I D(\mathbf{{h}}^{H}_{I,F})$, $\mathbf{p}_2=D(\mathbf{{h}}_{I,F})  \mathbf{H}_ID(\boldsymbol{\alpha})\mathbf{B}D(\boldsymbol{\alpha}^{H})\mathbf{{h}}^{H}_{f}$, $p_3=\mathbf{{h}}_{f}D(\boldsymbol{\alpha})\mathbf{B}D(\boldsymbol{\alpha}^{H})\mathbf{{h}}^{H}_{f}$, 
$r_3=\mathbf{{h}}_{f}D(\mathbf{B})\mathbf{{h}}^{H}_{f}$, $\mathbf{r}_2=D(\mathbf{{h}}_{I,F})\mathbf{H}_ID(\mathbf{B})\mathbf{{h}}^{H}_{f}$, and $\mathbf{R}_1=D(\mathbf{{h}}_{I,F})\mathbf{H}_ID(\mathbf{B})\mathbf{H}^{H}_ID(\mathbf{{h}}^{H}_{I,F})$.
\end{proof}

Define $\boldsymbol{\tilde{\phi}}^H = [\boldsymbol{\phi}^H , 1]\in \mathbb{C}^{1 \times (N+1)}$ and $\mathbf{Q} = \boldsymbol{\tilde{\phi}}\boldsymbol{\tilde{\phi}}^H \in \mathbb{C}^{(N+1) \times (N+1)}$ where, $\text{rank}(\mathbf{Q}) \leq 1$. Using the identity $\mathrm{Tr}\left[\mathbf{A}\mathbf{B}\mathbf{C}\right]=\mathrm{Tr}\left[\mathbf{C}\mathbf{A}\mathbf{B}\right]$ yields\par\noindent\small
\begin{align*}
    &\text{S}(\mathbf{Q}|\mathbf{B}) =\mathrm{Tr}\left[\mathbf{Q}\mathbf{P}(\mathbf{B)}\right]+p_3-\gamma \left(\sigma_o^2\mathrm{Tr}\left[\mathbf{Q}\mathbf{R}(\mathbf{B)}\right]+r_3 +\sigma_f^2\right).
\end{align*}\normalsize
Similarly,\par\noindent\small
\begin{align*}
    &\text{T}(\boldsymbol{\phi}|\mathbf{B})=\boldsymbol{\tilde{\phi}}^H \widetilde{\mathbf{P}}(\mathbf{B)} \boldsymbol{\tilde{\phi}}+\tilde{p}_3-\eta \left(\sigma_o^2 \boldsymbol{\tilde{\phi}}^H  \widetilde{\mathbf{R}}(\mathbf{B)} \boldsymbol{\tilde{\phi}}+\tilde{r}_3 +\sigma_e^2\right)  \nonumber \\
    &=\mathrm{Tr}\left[\mathbf{Q} \widetilde{\mathbf{P}}(\mathbf{B)}\right]+\tilde{p}_3-\eta \left(\sigma_o^2\mathrm{Tr}\left[\mathbf{Q} \widetilde{\mathbf{R}}(\mathbf{B)}\right]+\tilde{r}_3 +\sigma_e^2\right) = \text{T}(\mathbf{Q}|\mathbf{B}),
\end{align*} \normalsize
where $\widetilde{\mathbf{P}}(\mathbf{B)}$, $\widetilde{\mathbf{R}}(\mathbf{B)}$, $\tilde{p}_3$, $\tilde{r}_3$ are, respectively, obtained from ${\mathbf{P}}({\mathbf{B}})$, ${\mathbf{R}}({\mathbf{B}})$, ${p}_3$, ${r}_3$ by replacing $\mathbf{h}_F$ ($\mathbf{h}_{I,F}$) by $\mathbf{h}_E$ ($\mathbf{h}_{I,E}$). Hence, relaxing the rank constraint, i.e., $\text{rank}(\mathbf{Q}) \leq 1$, the optimization problem for this scenario becomes \par\noindent\small
\begin{align}\label{Opt_B}
& \underset{\mathbf{Q}}{\text{maximize}}
& &  \gamma  \nonumber \\
& \text{subject to}
& &  \text{S}(\mathbf{Q}|\mathbf{B}) \geq 0, \text{T}(\mathbf{Q}|\mathbf{B}) \leq 0, \; \mathbf{Q} \succeq 0, \nonumber \\
& & & \mathbf{Q}_{i,i} = 1,\; \; i=1,\hdots,N+1.
\end{align}\normalsize
For a given value of $\gamma$ (obtained using the bisection method) and $\mathbf{B}$, \eqref{Opt_B} is a convex feasibility-check problem, which is readily solved via convex optimization tools such as CVX \cite{grant2008cvx}.\looseness=-1

Next, for a known $\mathbf{Q}$, using the relation ${\mathbf{{h}}}_F=(\mathbf{h}_{I,F}\boldsymbol{\Theta}\mathbf{H}_{I} + \mathbf{h}_{f})$, the term ${\left | {\mathbf{h}}_{F}(\boldsymbol{\alpha} \odot\boldsymbol{{\beta}}) \right |^2}$ reduces to \par\noindent\small
\begin{align}\label{EQ:19}
 &{\left | {\mathbf{h}}_{F}(\boldsymbol{\alpha} \odot\boldsymbol{{\beta}}) \right |^2} = \text{Tr}\Big[\underbrace{D(\boldsymbol{\alpha}^{H})\mathbf{{h}}_{f}}_{\mathbf{k}_2}\boldsymbol{{\phi}}^{H}D(\mathbf{{h}}_{I,F})  \mathbf{H}_ID(\boldsymbol{\alpha})\mathbf{B}\Big]  \nonumber \\
 &+\text{Tr}\Big[\underbrace{D(\boldsymbol{\alpha}^{H})\mathbf{H}^{H}_I D(\mathbf{{h}}^{H}_{I,F})}_{\mathbf{K}_1}\boldsymbol{{\phi}}\boldsymbol{{\phi}}^{H}\underbrace{D(\mathbf{{h}}_{I,F})\mathbf{H}_ID(\boldsymbol{\alpha})}_{\mathbf{K}^H_1} \mathbf{B}\Big]   \nonumber\\
 &+ \text{Tr}\Big[D(\boldsymbol{\alpha}^{H}) \mathbf{{h}}_{f}\mathbf{{h}}^{H}_{f}D(\boldsymbol{\alpha})\mathbf{B}\Big]+\text{Tr}\Big[D(\boldsymbol{\alpha}^{H})\mathbf{H}^{H}_ID(\mathbf{{h}}^{H}_{I,F})\boldsymbol{\phi}\mathbf{{h}}^{H}_{f}D(\boldsymbol{\alpha})\mathbf{B}\Big] \nonumber \\
&=\text{Tr}\bigg[\underbrace{\begin{bmatrix}
 \mathbf{K}_1 & \mathbf{k}_2
 \end{bmatrix}}_{\mathbf{K}} \underbrace{\begin{bmatrix}
\boldsymbol{{\phi}}\boldsymbol{{\phi}}^{H} & \boldsymbol{{\phi}} \\
\boldsymbol{{\phi}}^{H} & 1
\end{bmatrix}}_{\mathbf{Q}}
\begin{bmatrix}
 \mathbf{K}^H_1 \\ \mathbf{k}^H_2
 \end{bmatrix}\mathbf{B}\bigg] = \text{Tr}[\mathbf{K}\mathbf{Q}\mathbf{K}^{H}\mathbf{B}].
 \end{align}\normalsize
Furthermore, we recast the term ${\mathbf{h}_F}D(\boldsymbol{{\beta}}^{H})D(\boldsymbol{{\beta}}){\mathbf{h}}^{H}_F$ as\par\noindent\small
\begin{align}\label{EQ:20}
&{\mathbf{h}_F}D(\boldsymbol{{\beta}}^{H})D(\boldsymbol{{\beta}}){\mathbf{h}}^{H}_F \nonumber\\
&= \text{Tr}\left[\mathbf{B}^{T}D({\mathbf{h}}_F)D^{H}({\mathbf{h}}_F)\right] =  \text{Tr}\left[\mathbf{B}^{T}(({\mathbf{h}}_{F}^{H}{\mathbf{h}}_F)\odot\mathbf{I})\right], \nonumber\\
&= \text{Tr}\left[\mathbf{B}^{T}(((\mathbf{h}_{I,F}\boldsymbol{\Theta}\mathbf{H}_{I} + \mathbf{h}_{f})^H(\mathbf{h}_{I,F}\boldsymbol{\Theta}\mathbf{H}_{I} + \mathbf{h}_{f}))\odot\mathbf{I})\right], \nonumber \\
&=\text{Tr}\big[\mathbf{B}^{T}\big(\big(\underbrace{\mathbf{H}^H_{I}D(\mathbf{h}^H_{I,F})}_{\mathbf{L}_1}\boldsymbol{\phi}^{*}\boldsymbol{\phi}^{T}D(\mathbf{h}_{I,F})\mathbf{H}_{I} + \mathbf{H}^H_{I}D(\mathbf{h}^H_{I,F}) \nonumber \\
&\hspace{+13pt}\boldsymbol{\phi}^{*}\mathbf{h}_{f}+ \mathbf{h}^H_{f}\boldsymbol{\phi}^{T}D(\mathbf{h}_{I,F})\mathbf{H}_{I} + \mathbf{h}_{f}^{H}\mathbf{h}_{f}\big)\odot\mathbf{I}\big)\big], \nonumber \\
&=\text{Tr}\bigg[\mathbf{B}^{T}\bigg(\bigg(\underbrace{\begin{bmatrix}
 \mathbf{L}_1 & \mathbf{h}^{H}_f
 \end{bmatrix}}_{\mathbf{L}^{*}} \underbrace{\begin{bmatrix}
\boldsymbol{{\phi}}^{*}\boldsymbol{{\phi}}^{T} & \boldsymbol{{\phi}}^{*} \\
\boldsymbol{{\phi}}^{T} & 1
\end{bmatrix}}_{\mathbf{Q}^T}
\begin{bmatrix}
 \mathbf{L}^{H}_1 \\ \mathbf{h}_f
 \end{bmatrix}\bigg)\odot\mathbf{I}\bigg)\bigg] \nonumber \\
 &=\text{Tr}\left[\mathbf{B}^T((\mathbf{L}^*\mathbf{Q}^T\mathbf{L}^{T})\odot\mathbf{I})\right]=\text{Tr}\left[((\mathbf{L}\mathbf{Q}\mathbf{L}^{H})\odot\mathbf{I})\mathbf{B}\right].
 \end{align}\normalsize

Using \eqref{EQ:19} and \eqref{EQ:20}, we obtain 
 $\text{S}(\mathbf{B}|\mathbf{Q})= \text{Tr}[\mathbf{K}\mathbf{Q}\mathbf{K}^{H}\mathbf{B}] - \gamma(\sigma^2_{o}\text{Tr}[((\mathbf{L}\mathbf{Q}\mathbf{L}^{H})\odot\mathbf{I})\mathbf{B}] + \sigma^2_f)$. 
 Similar steps for $\text{T}(\mathbf{B}|\mathbf{Q})$ yield
$\text{T}(\mathbf{B}|\mathbf{Q})=\text{Tr}[\widetilde{\mathbf{K}}\mathbf{Q}\widetilde{\mathbf{K}}^{H}\mathbf{B}] - \eta(\sigma^2_{o}\text{Tr}[((\widetilde{\mathbf{L}}\mathbf{Q}\widetilde{\mathbf{L}}^{H})\odot\mathbf{I})\mathbf{B}] + \sigma^2_e)$,
where $\widetilde{\mathbf{K}}$ and $\widetilde{\mathbf{L}}$ are obtained from ${\mathbf{K}}$ and ${\mathbf{L}}$, respectively, by replacing $\mathbf{h}_F$ ($\mathbf{h}_{I,F}$) by $\mathbf{h}_E$ ($\mathbf{h}_{I,E}$). Hence, by relaxing the constraint $\text{rank}(\mathbf{B}) \leq 1$, the problem for this case changes to\par\noindent\small
\begin{align}\label{Opt_Q}
\underset{\mathbf{B}}{\text{maximize}}
\; &\gamma   \nonumber \\
 \text{subject to} & \;
 \text{Tr}[\boldsymbol{\Lambda}\mathbf{B}]+\sigma^2_0\text{Tr}[\mathbf{B}] \leq P_T,\nonumber\\
\;\;\;\;\;\;\;\;\;\;\;& \text{S}(\mathbf{B}|\mathbf{Q}) \geq 0, \; \text{T}(\mathbf{B}|\mathbf{Q}) \leq 0,\; \text{and} \; \mathbf{B} \succeq 0.
\end{align}\normalsize
Again, for a given $\gamma$ (obtained, as described previously, using the bisection method) and $\mathbf{Q}$, the problem in \eqref{Opt_Q} is also a convex feasibility-check problem that is solved similar to \eqref{Opt_B}. This iterative procedure is summarized in Algorithm~\ref{alg:label}, where the superscripts $(1)$ and $(2)$ denote variables corresponding to problems \eqref{Opt_B} and \eqref{Opt_Q}, respectively. The iterations are repeated till convergence or the maximum number of iterations. Note that if any of the obtained optimal matrices $\mathbf{B}^*$ and $\mathbf{Q}^*$ is rank-1, then the eigenvalue decomposition (EVD) of the respective matrix gives the optimal transmit or reflective beamformer. In case the rank of the solution is greater than 1, we apply the Gaussian-approximation technique \cite{5447068}, \cite{wu2019intelligent} to obtain a high quality rank-one solution invoking the EVD. The computational complexity of the proposed SDR-based JTRB design follows from \cite{gong2021beamforming} and is bounded by 
$\mathcal{O}(N_{iter}(log_{2}((\gamma_{max} - \gamma_{min})/\epsilon))(\sqrt{(2N + 4)}(N+1)^{6} + \sqrt{(K+3)}K^{6}))$.

\begin{algorithm}[H]
	\caption{SDR-based alternate optimization for JTRB} 
	\label{alg:label}
	\begin{algorithmic}[1]
		\Statex \textbf{Input:} $\boldsymbol{\Lambda}, \mathbf{P}(\mathbf{B)}, \mathbf{R}(\mathbf{B)}, \widetilde{\mathbf{P}}(\mathbf{B)}, \widetilde{\mathbf{R}}(\mathbf{B)}, \mathbf{L}, \mathbf{K}, \widetilde{\mathbf{L}}, \widetilde{\mathbf{K}}, p_3, r_3, \tilde{p}_3, \tilde{r}_3 ,$
		\Statex $\eta, P_T, \sigma^2_{e}, \sigma^2_{f}$ 
		\Statex \textbf{Output:} $\mathbf{Q}^\ast, \mathbf{B}^\ast$
        \State \textbf{Initialize:} $\gamma_{min}$, $\gamma_{max}$, $N_{iter}$, $\epsilon$, $\mathbf{{Q}} = \mathbf{{Q}}^{0}$, $n = 1$
        \For{$n = 1:N_{iter}$}
        \State $\gamma_{min}^{(1)} \leftarrow \gamma_{min}, \gamma_{max}^{(1)} \leftarrow \gamma_{max}$
        \While{$(\gamma_{max}^{(1)}-\gamma_{min}^1 \geq \epsilon)$}
        \State $\gamma_{opt}^{(1)} \leftarrow (\gamma_{max}^{(1)} + \gamma_{min}^{(1)})/2$,
        $\gamma \leftarrow \gamma_{opt}^{(1)}$ in \eqref{Opt_B}
        \State \textbf{if} \eqref{Opt_B} is feasible:
         $\gamma_{min}^{(1)} \leftarrow \gamma_{opt}^{(1)}$
         \textbf{else}:
         $\gamma_{max}^{(1)} \leftarrow \gamma_{opt}^{(1)}$
        \EndWhile
        \State \textbf{end while}
        \State 
        $\mathbf{Q}^{*} \leftarrow \mathbf{Q}(n)$
        \State 
        $\gamma_{min}^{(2)} \leftarrow \gamma_{min}, \gamma_{max}^{(2)} \leftarrow \gamma_{max}$
        \While{$(\gamma_{max}^{(2)}-\gamma_{min}^{(2)} \geq \epsilon)$}
        \State $\gamma_{opt}^{(2)} \leftarrow (\gamma_{max}^{(2)} + \gamma_{min}^{(2)})/2$,
        $\gamma \leftarrow \gamma_{opt}^{(2)}$ in \eqref{Opt_Q}
        \State \textbf{if} \eqref{Opt_Q} is feasible:
         $\gamma_{min}^{(2)} \leftarrow \gamma_{opt}^{(2)}$
         \textbf{else}:
         $\gamma_{max}^{(2)} \leftarrow \gamma_{opt}^{(2)}$
        \EndWhile
        \State \textbf{end while}
        \State 
        $\mathbf{B}^{*} \leftarrow \mathbf{B}(n)$
        \EndFor
        \State \textbf{end for}
        \State \textbf{return} $\mathbf{{B}}^\ast, \mathbf{{Q}}^\ast$ 
 \end{algorithmic}
\end{algorithm}
The following Proposition~\ref{prop:conv} proves that the proposed algorithm converges. 
\begin{prop}
\label{prop:conv}
The proposed SDR-based algorithm produces a monotonically non-decreasing sequence of $\gamma$ values and is guaranteed to converge to a local maximum.
\end{prop}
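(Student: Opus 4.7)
The plan is to establish Proposition~\ref{prop:conv} in two stages: first, show that the sequence of $\gamma$ values produced across the inner bisections is monotonically non-decreasing; second, show that this sequence is bounded above, so that the monotone convergence theorem delivers convergence, with the limit characterized as a coordinatewise (local) maximum.

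For monotonicity, denote by $\gamma^{(n,1)}$ and $\gamma^{(n,2)}$ the optimal objective values obtained in the $\mathbf{Q}$-update \eqref{Opt_B} and the $\mathbf{B}$-update \eqref{Opt_Q} during the $n$-th outer iteration, respectively. The key observation is that $\mathrm{S}$ and $\mathrm{T}$ are \emph{bilinear} in $(\mathbf{Q},\mathbf{B})$: the same quantity appears in both subproblems, only the active optimization variable differs. Hence, at the start of iteration $n$, the pair $(\mathbf{Q}^{(n-1)},\mathbf{B}^{(n-1)})$ from the previous iteration is feasible for \eqref{Opt_B} with the fixed block $\mathbf{B}=\mathbf{B}^{(n-1)}$, achieving $\gamma=\gamma^{(n-1,2)}$; since the bisection in \eqref{Opt_B} returns the largest feasible $\gamma$, we get $\gamma^{(n,1)}\ge \gamma^{(n-1,2)}$. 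Symmetrically, once $\mathbf{Q}^{(n)}$ is fixed, the previous $\mathbf{B}^{(n-1)}$ still satisfies the power constraint \eqref{11b} and, by the bilinearity argument, still satisfies $\mathrm{S}(\mathbf{B}^{(n-1)}|\mathbf{Q}^{(n)})\ge 0$ and $\mathrm{T}(\mathbf{B}^{(n-1)}|\mathbf{Q}^{(n)})\le 0$ at $\gamma=\gamma^{(n,1)}$, so \eqref{Opt_Q} returns $\gamma^{(n,2)}\ge\gamma^{(n,1)}$. Chaining these inequalities yields a non-decreasing sequence $\{\gamma^{(n,2)}\}_{n\ge 1}$.

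Next, I would prove boundedness by exploiting that $\gamma$ models $\mathrm{SNR}_{\mathrm{FC}}$ in \eqref{EQ:MSE_FC}. Applying Cauchy--Schwarz to the numerator and discarding the nonnegative first term in the denominator gives
\begin{equation*}
\gamma \;\le\; \frac{\|\mathbf{h}_F\|^2\,\|\boldsymbol{\alpha}\odot\boldsymbol{\beta}\|^2}{\sigma_f^2}.
\end{equation*}
The transmit-power constraint \eqref{11b} implies $\|\boldsymbol{\alpha}\odot\boldsymbol{\beta}\|^2=\boldsymbol{\beta}^{H}D^{H}(\boldsymbol{\alpha})D(\boldsymbol{\alpha})\boldsymbol{\beta}\le P_T$, while $\|\mathbf{h}_F\|^2\le (\sum_{i=1}^{N}|\mathbf{h}_{I,F,i}|\,\|\mathbf{H}_{I,i}\|+\|\mathbf{h}_f\|)^2$ is finite because all channels are fixed and $|\boldsymbol{\phi}_i|=1$. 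Hence $\gamma$ admits a finite, $\mathbf{Q},\mathbf{B}$-independent upper bound, and the monotone convergence theorem guarantees that $\gamma^{(n,2)}$ converges to some $\gamma^\star<\infty$.

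Finally, I would argue that the limit is a local maximum of the alternating scheme. At $\gamma^\star$, both subproblems \eqref{Opt_B} and \eqref{Opt_Q} are solved to global optimality in their respective blocks (each being a convex feasibility problem coupled with bisection), so no unilateral update of $\mathbf{Q}$ or $\mathbf{B}$ can strictly increase $\gamma$; this is exactly the block-coordinatewise optimality condition characterizing a local maximum of the relaxed JTRB problem. The main obstacle I anticipate is the feasibility-transfer step: one must verify carefully that the $\mathrm{S},\mathrm{T}$ inequalities established in one subproblem remain valid when the roles of variable and parameter are swapped. This reduces to noting that the matrices $\mathbf{P},\mathbf{R},\widetilde{\mathbf{P}},\widetilde{\mathbf{R}}$ and $\mathbf{K},\mathbf{L},\widetilde{\mathbf{K}},\widetilde{\mathbf{L}}$ are all constructed so that $\mathrm{S}(\mathbf{Q}|\mathbf{B})$ and $\mathrm{S}(\mathbf{B}|\mathbf{Q})$ (resp.\ $\mathrm{T}$) evaluate to the same scalar at a common $(\mathbf{Q},\mathbf{B})$, a fact that can be read off directly from the trace-cyclic rewriting in \eqref{EQ:19}--\eqref{EQ:20}.
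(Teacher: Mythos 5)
Your proof is correct and follows essentially the same route as the paper's: the previous iterate remains feasible for each alternating subproblem, so the $\gamma$ sequence is non-decreasing, and boundedness then yields convergence by the monotone convergence theorem. The paper's own proof is a terser version of exactly this argument---it simply asserts the chain $\gamma\left(\mathbf{B}(n),\mathbf{Q}(n)\right) \leq \gamma\left(\mathbf{B}(n+1),\mathbf{Q}(n)\right) \leq \gamma\left(\mathbf{B}(n+1),\mathbf{Q}(n+1)\right)$ and invokes the upper bound $\gamma_{\max}$---so your feasibility-transfer and Cauchy--Schwarz details are welcome elaborations of the same approach rather than a different method.
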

\begin{proof}
Assume, in the $i$-th iteration, the optimal objective value of (17) is $\gamma\left(\mathbf{B}(n),\mathbf{Q}(n)\right)$. The pertinent optimization problems in \eqref{Opt_B} and \eqref{Opt_Q} are convex in nature. It follows that
  $\gamma\left(\mathbf{B}(n),\mathbf{Q}(n)\right) \leq \gamma\left(\mathbf{B}(n+1),\mathbf{Q}(n)\right) \leq \gamma\left(\mathbf{B}(n+1),\mathbf{Q}(n+1)\right)$.
Since the objective value is upper bounded by $\gamma_{\text{max}}$, the proposed algorithm is guaranteed to converge.
\end{proof}

\section{Numerical Experiments}
\label{sec:numexp}
This section presents the results of a simulation-based study to validate the model and schemes described. In all experiments, we generate the small-scale fading coefficients of all the channels as i.i.d. samples of a $\mathcal{CN}(0,1)$ random variable. Hence, their envelopes are Rayleigh distributed. The path loss model is $\mu(\frac{d}{d_0})^{-\nu}$ where $\mu = -30$ dB is the path loss at the reference distance of $1$ m. The path loss exponent $\nu$ is set to $2$ for all SNs-to-IRS, IRS-to-FC, and IRS-to-ED links and $3$ for SNs to the FC and ED links. The locations of the SNs, IRS, FC and ED are given by their Cartesian coordinates. The SNs are distributed uniformly at random in the square region $([0,40] \times [0,40])$ m$^2$. The IRS, FC, and ED are located at coordinates $(60,20)$, $(65,25)$, and $(70,15)$ m, respectively. The noise variances $\sigma^2_{o}$, $\sigma^2_{e}$, and $\sigma^2_{f}$ are set to $-70$ dBm. We use $\epsilon = 0.01$ and $\gamma_{max} = \frac{\left | \mathbf{h}_{F} \right |^2P_{i}}{\sigma ^{2}_{f}}$.

Figure \ref{Fig2}a depicts the $\text{MSE}_{\text{FC}}$ performance as a function of the number of reflecting elements $N$ of the IRS with $\eta=1$ and $P_T=30$ dBm for $K=5$ and $8$ sensors. Note that, as $N$ increases, the $\text{MSE}_{\text{FC}}$ decreases because increasing $N$ leads to sharper reflective beamforming. Also, with the rise in the number of SNs within the WSN, the $\text{MSE}_{\text{FC}}$ performance improves due to the availability of more observations at the FC. With $N=10$ IRS elements and $K=5$ sensors, $\text{MSE}_{\text{FC}}$ reduces by $39\%$ over that of a non-IRS system, clearly illustrating the advantage of incorporating an IRS subsystem in the WSN. For the same number of sensors, when $N$ increases ten-fold (10 to 100), the $\text{MSE}_{\text{FC}}$ decreases by $78\%$, which is significant.

Figure \ref{Fig2}b demonstrates the $\text{MSE}_{\text{FC}}$ performance as a function of total power budget $P_T$ for different values of the number of reflecting elements $N$ of the IRS. As the transmit power in increased in the WSN, the MSE performance improves as expected. The MSE is also seen to decrease with $N$, similar to the trend observed previously. Figure \ref{Fig2}c shows the $\text{MSE}_{\text{FC}}$ performance against a varying threshold $\eta$ for $K=5$, $N=20$ and $P_T=30$ dBm. The  $\text{MSE}_{\text{FC}}$ improves when $\eta$ increases, since this enlarges the feasible region to determine the transmit and reflective beamformers. Once again, higher values of $P_T$ lead to lower $\text{MSE}_{\text{FC}}$. 

Finally, Fig.~\ref{Fig2}d depicts the convergence of the proposed SDR-based JTRB scheme with respect to the number of iterations with $\eta=1$ and $P_T=30$ dBm for $K=5$ and $N=20$. Note that the scheme converges within very few iterations, which makes it well-suited for practical implementation.
\begin{figure}[t]
\captionsetup[subfloat]{position=top,labelformat=empty}
\centering
\vspace{-16pt}
\hspace{-2pt}\subfloat[]{\includegraphics[width=0.9\columnwidth]{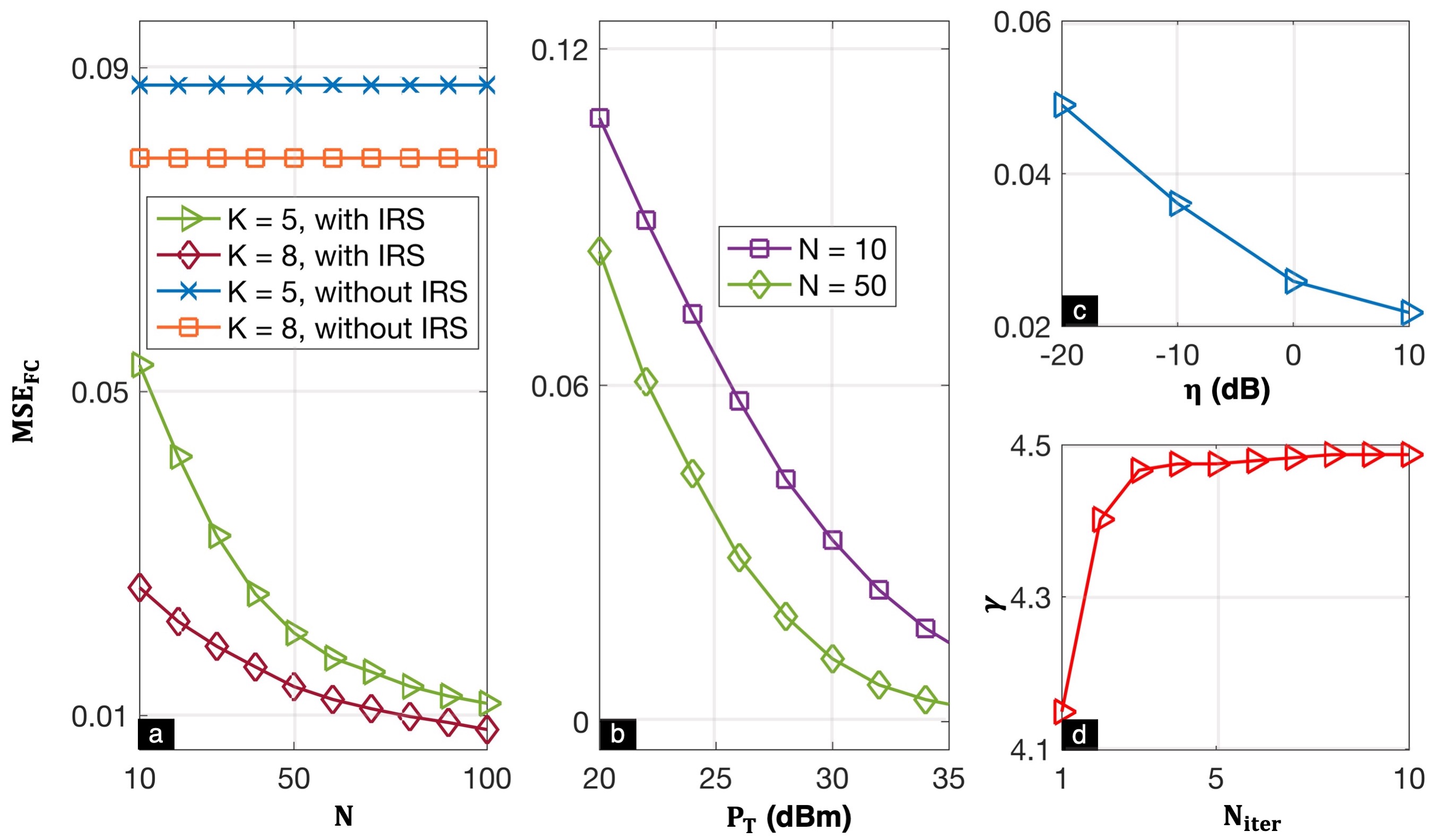}}
\hfil
\caption{$\left(a\right)$ $\text{MSE}_{\text{FC}}$ versus number of IRS elements $N$. $\left(b\right)$ $\text{MSE}_{\text{FC}}$ versus total power $P_T$. $\text{MSE}_{\text{FC}}$ versus ED's threshold $\eta$ for $\left(c\right)$  $P_T=35$ dBm and $\left(d\right)$ Objective value of (11a) versus the number of iterations.
}
\label{Fig2}
\end{figure}

\section{Summary}
This paper presented a novel JTRB design procedure for secure FC-based parameter estimation in an IRS-assisted WSN. The proposed iterative scheme minimized the $\text{MSE}_{\text{FC}}$ while obeying the total transmit power and ED SNR constraints. It was observed that the presence of an IRS results in a substantial improvement in the MSE - nearly $40$\% for 5 sensors and 10 IRS elements - at the FC, as compared to a non-IRS-based WSN, without compromising the security. 
\clearpage
\balance
\bibliographystyle{IEEEtran}
\bibliography{main}
\end{document}